\definecolor{dkgreen}{rgb}{0,0.6,0}
\definecolor{ltblue}{rgb}{0,0.4,0.4}
\definecolor{dkviolet}{rgb}{0.3,0,0.5}
\definecolor{dkblue}{rgb}{0.0,0.1,0.6}
\theoremstyle{plain}
\newtheorem{theorem}{Theorem}
\newtheorem{corollary}{Corollary}
\theoremstyle{definition}
\newtheorem{definition}{Definition}
\newtheorem{example}{Example}
\theoremstyle{remark}
\newtheorem{remark}{Remark}
\newcommand{\listingsttfamily}{\fontfamily{SourceCodePro-TLF}\small}
\lstdefinelanguage{Coq}{
	mathescape=true,
	texcl=false,
	morekeywords=[1]{Section, Module, End, Require, Import, Export,
		Variable, Variables, Parameter, Parameters, Axiom, Hypothesis,
		Hypotheses, Notation, Local, Tactic, Reserved, Scope, Open, Close,
		Bind, Delimit, Definition, Let, Ltac, Fixpoint, CoFixpoint, Add,
		Morphism, Relation, Implicit, Arguments, Unset, Contextual,
		Strict, Prenex, Implicits, Inductive, CoInductive, Record,
		Structure, Canonical, Coercion, Context, Class, Global, Instance,
		Program, Infix, Theorem, Lemma, Corollary, Proposition, Fact,
		Remark, Example, Proof, Goal, Save, Qed, Defined, Hint, Resolve,
		Rewrite, View, Search, Show, Print, Printing, All, Eval, Check,
		Projections, inside, outside, Def},
	morekeywords=[2]{forall, exists, exists2, fun, fix, cofix, struct,
		match, with, end, as, in, return, let, if, is, then, else, for, of,
		nosimpl, when},
	morekeywords=[3]{Type, Prop, Set, true, false, option, Z, list, nat},
	morekeywords=[4]{pose, set, move, case, elim, apply, clear, hnf,
		intro, intros, generalize, rename, pattern, after, destruct,
		induction, using, refine, inversion, injection, rewrite, congr,
		unlock, compute, ring, field, fourier, replace, fold, unfold,
		change, cutrewrite, simpl, have, suff, wlog, suffices, without,
		loss, nat_norm, assert, cut, trivial, revert, bool_congr, nat_congr,
		symmetry, transitivity, auto, split, left, right, autorewrite, reflexivity},
	morekeywords=[5]{by, done, exact, tauto, romega, omega,
		assumption, solve, contradiction, discriminate},
	morekeywords=[6]{do, last, first, try, idtac, repeat},
	morekeywords=[7]{push, pop},
	morecomment=[s]{(*}{*)},
	showstringspaces=false,
	morestring=[b]",
	morestring=[d]’,
	tabsize=3,
	extendedchars=false,
	sensitive=true,
	breaklines=false,
	basicstyle=\footnotesize\listingsttfamily,
	captionpos=b,
	columns=[l]flexible,
	identifierstyle={\footnotesize\listingsttfamily\color{black}},
	keywordstyle=[1]{\bfseries\footnotesize\listingsttfamily\color{dkviolet}},
	keywordstyle=[2]{\bfseries\footnotesize\listingsttfamily\color{dkgreen}},
	keywordstyle=[3]{\bfseries\footnotesize\listingsttfamily\color{ltblue}},
	keywordstyle=[4]{\bfseries\footnotesize\listingsttfamily\color{blue}},
	keywordstyle=[5]{\footnotesize\listingsttfamily\color{red}},
	keywordstyle=[7]{\bfseries\footnotesize\listingsttfamily\color{BrickRed}},
	stringstyle=\footnotesize\listingsttfamily,
	commentstyle={\footnotesize\listingsttfamily\color{dkgreen}},
	literate=
	{\\forall}{{\color{dkgreen}{$\forall\;$}}}1
	{\\exists}{{$\exists\;$}}1
	{<-}{{$\leftarrow\;$}}1
	{=>}{{$\Rightarrow\;$}}1
	{==}{{\code{==}\;}}1
	{==>}{{\code{==>}\;}}1
	{->}{{$\rightarrow\;$}}1
	{<->}{{$\leftrightarrow\;$}}1
	{<==}{{$\leq\;$}}1
	{\#}{{$^\star$}}1
	{\\o}{{$\circ\;$}}1
	{\@}{{$\cdot$}}1
	{\/\\}{{$\wedge\;$}}1
	{\\\/}{{$\vee\;$}}1
	{++}{{\code{++}}}1
	{~}{{$\sim$}}1
	{\@\@}{{$@$}}1
	{\\mapsto}{{$\mapsto\;$}}1
	{\\hline}{{\rule{\linewidth}{0.5pt}}}1
    {\ }{{\ }}1
}[keywords,comments,strings]
\lstdefinestyle{c}{
    language=C,
    morekeywords={assert, true, false, then, fi, PUSH, POP, INC, DEC, FOR, SKIP, let, in},  
	backgroundcolor=\color{backcolour},
    basicstyle=\footnotesize\listingsttfamily,
    breakatwhitespace=false,         
    breaklines=true,                 
    belowcaptionskip=.5\baselineskip, 
    captionpos=b,                    
    commentstyle=\color{purple!40!black},
    deletekeywords={...},            
    escapeinside={\%*}{*)},          
    extendedchars=true,              
    identifierstyle=\color{blue},
    firstnumber=0,                
    stepnumber=1,                
    frame=trbl,
    keepspaces=true,                 
    keywordstyle=\bfseries\color{green!35!black},
    numbers=none,                    
    numbersep=10pt,                   
    numberstyle=\tiny\color{mygray}, 
    rulecolor=\color{black},         
    showspaces=false,                
    showstringspaces=false,          
    showtabs=false,                  
    stepnumber=2,                    
    stringstyle=\color{orange},
    tabsize=2,	                   
    title=\lstname,                   
    xleftmargin=\parindent,
    }
\newenvironment{dedication}
{
    \itshape             
    \raggedleft          
}
{
    \par 
}
\newcommand{\MH}[1]{\textcolor{red}{#1}}
\newcommand{\LH}[1]{\textcolor{blue}{\textbf{#1}}}
\newcommand{\PIF}{{\normalfont\textsf{PIF-reversibilization}}\xspace}
\newcommand{\TIF}{{\normalfont\textsf{TIF-reversibilization}}\xspace}
\newcommand{\nsemantics}{{\normalfont\textsf{N-semantics}}\xspace}
\newcommand{\asemantics}{{\normalfont\textsf{A-semantics}}\xspace}
\newcommand{\rsemantics}{{\normalfont\textsf{R-semantics}}\xspace}
\newcommand{\StrRev}{{\normalfont\textsf{Strongly-reversibile}}\xspace}
\newcommand{\WeaRev}{{\normalfont\textsf{Weakly-reversibile}}\xspace}
\newcommand{\SCORE}{{\normalfont\textsf{S-CORE}}\xspace}
\newcommand{\SRL}{{\normalfont\textsf{SRL}}\xspace}
\newcommand{\JANUS}{{\normalfont\textsf{Janus}}\xspace}
\newcommand{\RCORE}{{\normalfont\textsf{R-CORE}}\xspace}
\newcommand{\C}{{\normalfont\textsf{C}}\xspace}
\newcommand{\Coq}{\href{https://coq.inria.fr}{\normalfont\textsf{Coq/Rocq}}\xspace}
\newcommand{\lstiC}[1]{\mbox{\normalfont\lstinline[style=c]|#1|}}
\newcommand{\vCoq}[1]{\mbox{\normalfont\lstinline[language=Coq]|#1|}}
\newcommand{\invSC}[1]{\lstiC{-#1}}
\newcommand{\NN}{\mathbb{N}}
\newcommand{\ZZ}{\mathbb{Z}}
\newcommand{\StackZZ}{\mathcal{S}_{\ZZ}}
\newcommand{\nil}{[\,]}
\newcommand{\cons}{\!::\!}
\newcommand{\hd}{\operatorname{hd}}
\newcommand{\tl}{\operatorname{tl}}
\title{Reversible Computation with\\ Stacks and ``Reversible Management of Failures''}
\author{
Matteo Palazzo \qquad\qquad Luca Roversi
\institute{Dipartimento di Informatica\\ Università di Torino -- Italy}
\email{\quad matteo.palazzo@unito.it \quad\qquad luca.roversi@unito.it}
}
\def\titlerunning{Reversible Computation, Stacks, Management of Failures}
\def\authorrunning{M. Palazzo \& L. Roversi}
\begin{document}
\maketitle

\begin{abstract}
This work examines approaches to making computational models reversible. Broadly speaking, transforming a computational model into a reversible one, i.e. reversibilizing it, means extending its operational semantics conservatively in a way that each term of the model is interpretable as a bijection. 
We recall that the most common strategy to reversibilize a computational model yields operational semantics that halts computations whenever a computational state cannot be uniquely determined from its successor state, thereby allowing terms to be interpreted as \emph{partial} bijective functions.
We are interested in reversible computational models whose terms can be interpreted as \emph{total} bijective functions. This is essential for studying aspects of computational complexity related to reversible computational models.
We introduce \SCORE, a language designed for manipulating variables and stacks. 
Notably, common reversibilization strategies naturally lead to interpreting the functions for stack manipulation as partial bijections.
According to our interests, we demonstrate how to interpret \SCORE in a state space where, using a proof-assistant, we certify that stack operations are \emph{total} bijections. It follows that all \SCORE terms can be interpreted as \emph{total} bijections.
\end{abstract}
\begin{dedication}
    Dedicated to Stefano Berardi\\
    on the occasion of his 64th birthday.
\end{dedication}
\section{Introduction}
\label{section:Introduction}
Landauer's Principle \cite{landauer1961irreversibility} states that irreversible (i.e., classical) computation inherently dissipates energy, as it typically erases bits of information. This insight led to the exploration of reversible computational models and paradigmatic programming languages as alternatives to classical ones because inherently free from information erasure and unnecessary energy dissipation.

Fredkin and Toffoli gates \cite{Toffoli1980,FredkinToffoli1982} provided the foundational basis for designing reversible algorithms. Since then, reversible computational models have been introduced to advance our understanding of computation, characterized by the following core principles:
\begin{itemize}
    \item In terms of computable functions, reversible computation can only process injective functions \cite{Axelsen2011WhatDR}.
    \item Regarding operational semantics, every step in the forward direction from input to output can always be traced back uniquely recovering the input from the output.
\end{itemize}

\noindent
For example, reversible computation implemented by reversible programming languages can offer practical advantages:
\begin{itemize}
    \item It simplifies debugging and testing by enabling programs to run backward, allowing for more effective error tracing and behavioral analysis \cite{LaneseG:rc24}.
    \item It naturally provides primitives that facilitate efficient checkpointing and rollback mechanisms-critical for long-running applications. These mechanisms enable programs to revert to previous states without losing progress, ensuring smooth recovery from errors and interruptions \cite{perumalla2013introduction}.
\end{itemize}

\paragraph{``Reversibilization''.}
A common approach to defining the operational meaning of ``reversible computation'' is to develop techniques for \emph{``reversibilizing''} the operational semantics of classical programming models or languages, whether deterministic or non-deterministic. We propose two distinct descriptions that can be used to clarify how ``reversibilization'' can be obtained in a deterministic setting. 

\paragraph{``Reversibilize to Partial Injective Functions (\PIF)''.}
This form of reversibility is realized by allowing the operational semantics of a reversible computational model to abort the interpretation of a term when the current state does not produce a new state from which we can recover the current one unambiguously. 
Consequently, terms are interpreted as partial injective mappings: (i) partial because they are not defined for every input; (ii) injective because, when no abort occurs, 
both the next and the previous computational states can be
uniquely determined independently from the state in which the computation is.

The reversible programming language \JANUS \cite{LUTZ:JANUS86} is an instance of \PIF. Its operational semantics extends that one of a Turing-complete fragment of the \C programming language by incorporating \lstiC{assert} statements that enforce properties specific to computational states\footnote{\emph{State of a computation}: the values of relevant variables at a given point in the computation.}. 

Verifying computational states via \lstiC{assert} is crucial for deriving the ``backward'' interpretation of \JANUS programs from output to input. Any computation -- whether executing in the ``forward'' or ``backward'' direction -- that enters a state failing to meet an \lstiC{assert} condition is aborted.

\begin{example}
We show why \JANUS \lstiC{if-then-else} is reversible.
Recall that \JANUS \lstiC{if-then-else} has the form
``\lstiC{if b then P else Q fi a}'' where \lstiC{b} is the usual guard and \lstiC{a} is a boolean expression.
The ``forward'' interpretation of ``\lstiC{if b then P else Q fi a}'' works as expected: if \lstiC{b} evaluates to \lstiC{true}, then \lstiC{P} is executed; otherwise \lstiC{Q} is executed.
Let $\sigma$ be the state obtained after either \lstiC{P} or \lstiC{Q} is executed. At this point, the \JANUS semantics checks \lstiC{assert a} in the state $\sigma$.
It is the programmer’s task to define \lstiC{a} so that if $\sigma$ comes from \lstiC{P}, then \lstiC{assert a} holds, and if $\sigma$ comes from \lstiC{Q}, then \lstiC{not(assert a)} holds. In all other cases, the ``forward'' interpretation of ``\lstiC{if b then P else Q fi a}'' aborts.

In the ``backward'' direction, the roles of \lstiC{b} and \lstiC{a} are swapped. This means that we interpret ``\lstiC{if a then P else Q fi b}'' instead of ``\lstiC{if b then P else Q fi a}'', where \lstiC{a} is now the guard and \lstiC{b} is checked by \lstiC{assert}.
Here, \lstiC{a} decides which branch to follow, and \lstiC{assert b} must evaluate to \lstiC{true} or \lstiC{false}, depending on the branch just taken, to avoid \emph{abort}.

The key idea is that \lstiC{b} and \lstiC{a} together ensure that
\lstiC{P} is executed ``forward'' if and only if it is also executed ``backward'', and
\lstiC{Q} is executed ``forward'' if and only if it is also executed ``backward''.
\end{example}

\paragraph{``Reversibilize to Total Injective Functions (\TIF)''.}
This form of reversibilization can be achieved by refining the representation of the computational states that the terms in a classical computational model manipulate. The objective is for these terms to be interpreted as \emph{total} injective functions. While this limit expressivity, reversible programs always correspond to total functions, ensuring they never fail.

\SRL \cite{MATOS:TCS15} is an instance of \TIF. It is a reversible programming language in which variables, unlike in many other models, range over $\mathbb{Z}$, instead of only $\NN$, to assure that the increment or the decrement of any value is always reversible, i.e.\! that one operation can always be undone by the other.
The operational semantics of \SRL always terminates and does not require \lstiC{assert} on computational states, a property that justifies our terminology. As a result, \SRL is less expressive than \JANUS. Still, it can encode every \emph{primitive recursive function} \cite{MalettoR:JLAMP24}, as long as an \lstiC{if-then-else} structure is included \cite{MatosPR:RC20}.
The ``backward'' execution of any program \lstiC{P} in \SRL is the same as the ``forward'' execution of its inverse \lstiC{-(P)}, where \lstiC{-(P)} is \emph{syntactically generated} from \lstiC{P}.

\begin{example}
A \SRL program is:
\begin{align*}
\lstiC{INC x; FOR x (DEC y); FOR x \{INC y\}; DEC x}
\enspace .
\end{align*}
\noindent
For any initial value $u$ in \lstiC{x}, ``\lstiC{INC x}'' increments it. Consequently, ``\lstiC{FOR x \{DEC y\}}'' decreases the value $v$ in \lstiC{y} for $u+1$ times. Interpreting ``\lstiC{INC x; FOR x \{DEC y\}}'' sets \lstiC{y} to $v-(u+1)$.
The second part ``\lstiC{FOR x \{INC y\}; DEC x}'' \emph{is the inverse} ``\invSC{(INC x; FOR x \{DEC y\})}'' of the first part ``\lstiC{INC x; FOR x \{DEC y\}}'': increasing $u+1$ times \lstiC{y} and decreasing \lstiC{x} recovers the initial state with $u$ in \lstiC{x} and $v$ in \lstiC{y}.
\end{example}

\paragraph{Motivations.}
Between \PIF and \TIF, we focus on the latter. The reason is that we find \PIF insufficient for fully characterizing reversible computations -- for example, when addressing questions about the computational complexity of reversible computational models.

\paragraph{Contributions.}
We study the idea that, with a small amount of extra information, we can move from cases where ``the instruction \lstiC{I} in a reversible computational model $\mathcal{M}$ sometimes needs to \emph{abort} because it is not reversible in every state'' to cases where ``the instruction \lstiC{I} in $\mathcal{M}$ is always reversible because the computational states in which \lstiC{I} runs are \emph{properly refined}''.

Our view is that \lstiC{assert} is not strictly required to define expressive reversible computational models, covering both algorithms and the functions they compute.

We make this view concrete by extending \SRL to \SCORE. Compared to \SRL, \SCORE is more algorithmically expressive because it has variables that hold $\mathbb{Z}$-valued stacks, with \lstiC{PUSH} and \lstiC{POP} acting on them. In particular, the operational semantics of \SCORE uses refined states in which \lstiC{PUSH} and \lstiC{POP}, absent in \SRL, are \emph{never-aborting} mutual inverses. That is, \lstiC{PUSH} and \lstiC{POP} form a total isomorphism. This is achieved by designing the \emph{internal} stack representation to keep the following analogy:
\begin{quote}
``\lstiC{PUSH} on a stack \lstiC{s} corresponds to the increment \lstiC{INC} of some $\mathbb{Z}$-valued variable \lstiC{x}, just as \lstiC{POP} on \lstiC{s} corresponds to the decrement \lstiC{DEC} of \lstiC{x}.'' 
\end{quote}
\noindent
In particular, with the proof assistant \href{https://coq.inria.fr/}{\textsf{Coq/Rocq}}:
(i) we define two functions \vCoq{push} and \vCoq{pop}, which act as the operational semantics of \lstiC{PUSH} and \lstiC{POP};
(ii) we formally prove that \vCoq{push} and \vCoq{pop} are mutual inverses within a suitable three-dimensional domain of interpretation.
The related code is at \cite{PlazzoRoversi24RevPushPopGit}.

\begin{example}
A \SCORE program is:
\begin{align*}
\lstiC{FOR x \{POP s\}; FOR x \{PUSH s\}}
\enspace .
\end{align*}
\noindent
Let $3$ be the initial value of \lstiC{x}, and assume that \lstiC{s} contains the stack \lstiC{[2,1]}. In \SCORE:
\begin{itemize}
\item
interpreting ``\lstiC{FOR x \{POP s\}}'' empties \lstiC{s} beyond its bottom element because the iteration tries to remove more items than those initially present;
\item
interpreting ``\lstiC{FOR x \{PUSH s\}}'', i.e. the inverse ``\invSC{(FOR x \{POP s\})}'' of ``\lstiC{FOR x \{POP s\}}'', immediately after ``\lstiC{FOR x \{POP s\}}'', restores the original stack \lstiC{[2,1]} in \lstiC{s}.
\end{itemize}
\end{example}

\vspace{\baselineskip}
We conclude by noting that we use a step-by-step approach to build \vCoq{push} and \vCoq{pop}.
The reason is partly pedagogical, given the potentially broad readership of this work.
In particular, we introduce three different operational semantics of \SCORE, called \nsemantics, \asemantics, and \rsemantics.
The aim is to gradually guide the reader to see the structure of the states where \vCoq{push} and \vCoq{pop} are interpreted, ultimately removing the need for \lstiC{assert}, in line with the \TIF perspective.
\paragraph{Related work.}
We see the operational semantics of \lstiC{PUSH} and \lstiC{POP}, which ensure their mutual invertibility regardless of the initial stack content, as an instance of what Glück and Yokoyama term the ``\emph{Injectivisation process}'' in \cite{GLUCK2023113429}. 
Their focus is on the ``injectivisation of a function $f$,'' a process that \emph{extends} the co-domain of $f$. The output $f(x)$ is paired with an additional value $g(x)$, provided by a suitable function $g$, enabling the unique recovery of the input $x$ from the pair $(f(x), g(x))$. 

Drawing an analogy to injectivization, the structure used to interpret the stack in \SCORE is enhanced to ensure that \lstiC{PUSH} and \lstiC{POP} function as perfect inverses in all cases.

Furthermore, we see \SCORE as a paradigmatic imperative reversible language, analogous to \RCORE \cite{Glck2017AMR}. Both frameworks feature a minimal set of built-in primitives for reversible algorithm expression. The key distinction is that \RCORE aligns with \PIF, as its operational semantics utilizes \lstiC{assert} to guarantee reversibility when handling the ``tree'' data type. In contrast, \SCORE conforms to \TIF, adopting an \lstiC{assert}-free operational semantics.

\paragraph{This work as a tribute to Stefano.}
We decided to focus this sincere tribute to Stefano because we believe it can span both past and present aspects of Stefano's scientific interests. 
Recently, Stefano and the second author supervised a Doctoral thesis whose subject was partly on reversible computational models, resulting in \cite{DBLP:journals/tcs/BarileBR25}. 
In the past, Stefano worked also on dead-code elimination, which contributed to the development of both the proof assistant \Coq and the project \href{https://en.wikipedia.org/wiki/CompCert}{\textsf{CompCert}}, clearly relevant for supporting the importance of proof-assistants in building reliable and dependable software.

\paragraph{Structure of this document.}
Section~\ref{section:S-CORE} introduces \SCORE.
Section~\ref{section:A naive semantic ldots irreversible} introduces the naive semantics \nsemantics.
Section~\ref{section:S-CORE with assertions} introduces \asemantics according to \PIF.
Section~\ref{section:Assertion free S-CORE} defines \rsemantics according to \TIF.

\section{\textsf{S-CORE}}
\label{section:S-CORE}
We introduce \textsf{S-CORE} as an extension of \textsf{SRL} \cite{Matos:TCS03}, a reversible computational model that exclusively represents total injective functions. 
Two key features distinguish \SCORE from \SRL:
\begin{enumerate}
    \item \SCORE includes the primitives \lstiC{PUSH} and \lstiC{POP} for stack management;
    \item Each variable \lstiC{x} in \SCORE is represented as a triple consisting of:
    \begin{itemize}
        \item The first component, storing the current value of \lstiC{x};
        \item The second component, maintaining the history of values that \lstiC{x} has assumed;
        \item The third component, enabling \lstiC{PUSH} to function as a successor operation and \lstiC{POP} as a predecessor operation, ensuring their mutual reversibility.
    \end{itemize}
\end{enumerate}
\noindent
Before proceeding, let us recall the main features of \SRL.

\subsection{\SRL in a nutshell.}
\label{subsection:SRL in a nutshell}
\textsf{SRL} is a reversible imperative computational model 
in which every term \lstiC{P} syntactically determines its inverse \lstiC{-P}. We will illustrate this aspect technically when introducing \SCORE, the subject of this work.
The variables of \SRL range over $\ZZ$. The instruction ``\lstiC{INC x}'' increments the value of \lstiC{x} by one, whereas ``\lstiC{DEC x}'' decrements it by one. Moreover,
given that $v$ represents the value of \lstiC{x}, the loop construct ``\lstiC{FOR x P}'' iterates the body \lstiC{P} exactly $|v|$ times, following these rules:
\begin{itemize}
    \item If $v > 0$, then \lstiC{P} is \emph{iterated forward}, meaning each iteration executes \lstiC{P}.
    \item If $v < 0$, then the body is \emph{iterated backward}, meaning each iteration executes the \emph{inverse} \lstiC{-P} of \lstiC{P}.
    \item If $v = 0$, then \lstiC{P} is skipped entirely.
\end{itemize}

\subsection{\SCORE syntax}
\label{subsection:SCORE: syntax and first notions}
\begin{definition}[\textsf{S-CORE} syntax]
\label{definition:S-CORE syntax}
Let $V$ be a set of variable names \lstiC{x}, \lstiC{y}, \lstiC{z}, \ldots \enspace. Let $\lstiC{x} \in V$. \SCORE terms belong to the language generated by the grammar:
\begin{align*}
	\lstiC{P} & ::= \lstiC{SKIP} \mid \lstiC{INC x} \mid \lstiC{DEC x}
	\mid \lstiC{PUSH x} \mid \lstiC{POP x} \mid \lstiC{P;P}  \mid
    \lstiC{FOR x \{P\}}
    \enspace ,
\end{align*}
under the proviso that the \emph{leading variable} \lstiC{x} in ``\lstiC{FOR x \{P\}}'' \emph{cannot} occur in \lstiC{P}, that is \lstiC{P} never contains neither ``\lstiC{INC x}'' nor ``\lstiC{DEC x}''. Writing $\lstiC{P} \in \SCORE$ means that \lstiC{P} is an \SCORE term.
\end{definition}

\begin{remark}
\SRL contains all and only the \SCORE terms which are free of ``\lstiC{PUSH x}'' and ``\lstiC{POP x}'', for every variable \lstiC{x}.
\end{remark}

\begin{definition}[Inverse of a term in \SCORE]
\label{definition:Inverse -P of any SCORE term P}
The function \invSC{(.)}$:\SCORE$ $\to \SCORE$ inverts every $\lstiC{P}\in\SCORE$ by structural induction on \lstiC{P}:
\begin{alignat*}{3}
\invSC{(INC x)} &= \lstiC{DEC x}
&\quad
\invSC{SKIP}    &= \lstiC{SKIP}
&\quad
\invSC{(DEC x)} &= \lstiC{INC x}
\\
\invSC{(PUSH x)} & = \lstiC{POP x}
&& &
\invSC{(POP x)}  & = \lstiC{PUSH x}
\\
\invSC{(P;Q)} & = \invSC{Q}\lstiC{;}\invSC{P}
&& &
\invSC{(FOR x \{P\})} & = \lstiC{FOR x \{}\invSC{P}\lstiC{\}}
\enspace .
\end{alignat*}
\end{definition}

\begin{remark}
It should be obvious to see that the function \invSC{(.)} is \emph{self-dual}, i.e. $\lstiC{-(-(P))} = \lstiC{P}$, for every \SCORE program \lstiC{P}. 
\end{remark}

\subsection{Our plan towards the final operational semantics \rsemantics of \SCORE}

Clearly, in a classical computational model, \lstiC{PUSH} and \lstiC{POP} are not a pair of bijections because it is meaningless to \lstiC{POP} elements from an empty stack. We need to extend the structure of the domain in which we interpret \lstiC{PUSH} and \lstiC{POP} to eventually obtain that both ``\lstiC{-(PUSH)} = \lstiC{POP}'' and ``\lstiC{-(POP)} = \lstiC{PUSH}'' also hold at the operational semantics level.

We think it is useful to follow a step-wise extension of the domain to operationally interpret \lstiC{PUSH} and \lstiC{POP}, to show a concrete example of how to inject non reversible functions into reversible ones.

Our step-wise approach is realized by defining three big-step operational semantics -- \nsemantics, \asemantics, and \rsemantics\xspace -- and the corresponding domain of interpretation they rely on, following two general guidelines:
\begin{enumerate}
    \item Each semantics is based on its own notion of \emph{state} $\sigma$, mapping every variable to an appropriate current value.
    \item Each semantics is formalized as a relation with judgments of the form
    $\langle \lstiC{P}, \sigma \rangle \Downarrow \tau$
    which we interpret as 
        ``For every $\lstiC{P} \in \SCORE$, and for every state $\sigma$ and $\tau$, executing \lstiC{P} from $\sigma$ results in $\tau$.'' 
\end{enumerate}\noindent


\begin{figure}
	\centering
	\begin{tabular}{c}
		\bottomAlignProof
		\AxiomC{$ $}
		\RightLabel{\scriptsize \textsc{skip}}
		\UnaryInfC{$\langle \lstiC{SKIP}, \sigma \rangle \Downarrow \sigma$}
		\DisplayProof
		\qquad
		\bottomAlignProof
		\AxiomC{$\sigma(\lstiC{x}) = (v, s)$}
		\RightLabel{\scriptsize \textsc{inc}}
		\UnaryInfC{$\langle \lstiC{INC x}, \sigma \rangle \Downarrow \sigma[\lstiC{x} \to (v+1,s)]$}
		\DisplayProof
		\\ \\
		\bottomAlignProof
		\AxiomC{$\sigma(\lstiC{x}) = (v, s)$}
		\RightLabel{\scriptsize \textsc{dec}}
		\UnaryInfC{$\langle \lstiC{DEC x}, \sigma \rangle \Downarrow \sigma[\lstiC{x} \to (v-1,s)]$}
		\DisplayProof
		\qquad
		\bottomAlignProof
		\AxiomC{$\sigma(\lstiC{x}) = (v, s)$}
		\RightLabel{\scriptsize \textsc{push}}
		\UnaryInfC{$\langle \lstiC{PUSH x}, \sigma \rangle \Downarrow \sigma[\lstiC{x} \to (0,v::s)]$}
		\DisplayProof
		\\ \\
		\bottomAlignProof
		\AxiomC{$\sigma(\lstiC{x}) = (v,
        s)$}
		\RightLabel{\scriptsize \textsc{pop}}
		\UnaryInfC{$\langle \lstiC{POP x}, \sigma \rangle \Downarrow \sigma[\lstiC{x} \to (\hd(s),\tl(s))]$}
		\DisplayProof
		\qquad
		\bottomAlignProof
		\AxiomC{$\langle \lstiC{P} , \sigma \rangle \Downarrow \nu $}
		\AxiomC{$\langle \lstiC{Q} , \nu \rangle \Downarrow \tau $}
		\RightLabel{\scriptsize \textsc{seq}}
		\BinaryInfC{$\langle \lstiC{P;Q}, \sigma \rangle \Downarrow \tau$}
		\DisplayProof
        \\ \\
		\bottomAlignProof
		\AxiomC{$\sigma(\lstiC{x}) = (v,s)$}
		\AxiomC{$v \geq 0$}
		\AxiomC{$\langle \lstiC{P}, \sigma \rangle \Downarrow^v \tau$}
		\RightLabel{\scriptsize \textsc{for-fwd}}
		\TrinaryInfC{$\langle \lstiC{FOR x \{P\}}, \sigma \rangle \Downarrow \tau$}
		\DisplayProof
		\\ \\
		\bottomAlignProof
		\AxiomC{$\sigma(\lstiC{x}) = (v,s)$}
		\AxiomC{$v < 0$}
		\AxiomC{$
         \langle \lstiC{-(P)}, \sigma \rangle \Downarrow^{-v} \tau$}
		\RightLabel{\scriptsize \textsc{for-bwd}}
		\TrinaryInfC{$\langle \lstiC{FOR x \{P\}}, \sigma \rangle \Downarrow \tau$}
		\DisplayProof
		\\ \\
		\bottomAlignProof
		\AxiomC{$ $}
		\RightLabel{\scriptsize \textsc{base}}
		\UnaryInfC{$\langle \lstiC{P}, \sigma \rangle \Downarrow^0 \sigma$}
		\DisplayProof
		\qquad
		\bottomAlignProof
		\AxiomC{$\langle \lstiC{P}, \sigma \rangle \Downarrow \nu $}
		\AxiomC{$\langle \lstiC{P}, \nu \rangle \Downarrow^{n} \tau$}
		\RightLabel{\scriptsize \textsc{step}}
		\BinaryInfC{$\langle \lstiC{P}, \sigma \rangle \Downarrow^{n+1} \tau$}
		\DisplayProof
	\end{tabular}
	\caption{\nsemantics of \SCORE, the naive one.}
	\label{fig:scoreSemanticnaive}
\end{figure}

\section{A classical non-reversible semantics for \SCORE to start with}
\label{section:A naive semantic ldots irreversible}

We here introduce \nsemantics  for \SCORE, its `\textsf{N}' standing for `naive', word that emphasizes that the choices to define \nsemantics are simplistic, even though they provide a starting point for the other semantics introduced later.

\vspace{\baselineskip}\noindent
The set $\StackZZ$ of all and only the stacks containing values of $\ZZ$ is inductively given as:
\begin{itemize}
\item
$[\,] \in \StackZZ$, and
\item
$h \cons t \in \StackZZ$ if $h \in \mathbb{Z}$ and $t \in \StackZZ$
\end{itemize}
and is associated with the following head and tail functions:
\begin{align}
\label{align: hd and tl base}
\hd(\nil)     &= 0  &\tl(\nil)     &= \nil \\
\label{align: hd and tl ind}
\hd(h\cons t) &= h  &\tl(h \cons t) &= t
\enspace ,
\end{align}
\noindent
which are both \emph{total}, coherently with our interest about \TIF.

\begin{definition}[\nsemantics]
\label{definition:States of nsemantics}
Fig.~\ref{fig:scoreSemanticnaive} shows the rules of \nsemantics which operates on states $\sigma : V \to \ZZ \times \StackZZ$ such that, if $\sigma(\lstiC{x}) = (v, s)$ then:
\begin{itemize}
\item
$v$ is the value currently hold by \lstiC{x} in $\sigma$;
\item
$s$ is the stack currently hold by \lstiC{x} in $\sigma$, containing all and only the values previously stored in \lstiC{x} immediately before every interpretation of ``\lstiC{PUSH x}'' which resets \lstiC{x} to $0$.
\end{itemize}
\noindent
States are ranged over by $\Sigma$.
As usual, $\sigma[\lstiC{x} \to (v', s')]$ is $\sigma$ where \lstiC{x} maps to $(v', s')$, for some $v', s'$, and every \lstiC{y} other than \lstiC{x} maps to $\sigma(\lstiC{y})$.
\end{definition}
\noindent
The rule \textsc{push} interprets ``\lstiC{PUSH x}'' by pushing $v$ on top of $s$ and replacing $v$ by $0$, for any $\sigma$ and $x$ such that $\sigma(x) = (v, s)$. 
The rule \textsc{pop} interprets ``\lstiC{POP x}'' by popping the top $h$ of $s$ and using $h$ in place of $v$.

\subsection{\nsemantics always terminates}
The reason why, for every \lstiC{P}, and state $\sigma$, the interpretation of \lstiC{P} in $\sigma$ by means of \nsemantics always terminates in some unique state $\tau$ is almost straightforward.
Every iteration of ``\lstiC{FOR x \{P\}}'' generates the interpretation of a finite sequence ``\lstiC{P;...;P}'', regardless of whether ``\lstiC{FOR x \{P\}}'' is interpreted using \textsc{for-fwd} or \textsc{for-bwd}, as both unfold into a tree of multiple \textsc{step} occurrences with a single occurrence of \textsc{base} as its leaf.
Both ``\lstiC{INC x}'' and ``\lstiC{DEC x}'' rely on $\ZZ$ being infinite in both directions, meaning that there is no inherent limit to the number of times they can be applied.
The same observation holds for ``\lstiC{PUSH x}'' and ``\lstiC{POP x}''. Even when $\sigma(\lstiC{x}) = (v, \nil)$, ``\lstiC{POP x}'' can be applied indefinitely because, based on \eqref{align: hd and tl base}, $\nil$ is a fixed point of its interpretation $\tl(\cdot)$.

\subsection{\nsemantics is not reversible}
\label{subsection:nsemantics is not reversible}

This is because, for every \lstiC{x}, at least one state $\sigma$ exists such that \LH{($\langle \lstiC{POP x;} \invSC{(POP x)}, \sigma \rangle \not\Downarrow \sigma$)}, according to the rules in Fig.~\ref{fig:scoreSemanticnaive}.

For example, let $\sigma$ be such that $\sigma(\lstiC{x}) = (5,[2])$. 
Then, $\langle \lstiC{POP x}, \sigma \rangle \Downarrow \sigma[\lstiC{x} \to (2, \nil)]$. As $\lstiC{PUSH x} =$  $\invSC{(POP x)}$, we have $\langle \invSC{(POP x)}, \sigma[\lstiC{x} \to (2, \nil)] \rangle \Downarrow \sigma[\lstiC{x} \to (0,[2])]$, i.e. $\langle \lstiC{POP x;}\invSC{(POP x)}, \sigma \rangle \Downarrow \sigma[\lstiC{x} \to (0,[2])]$, but $\sigma[x \to (0,[2])] \neq \sigma$, which means that the sequential composition of ``\lstiC{POP x}'', followed by its inverse ``\lstiC{PUSH x}'', starting from $\sigma$, does not lead back to $\sigma$.

The problem is that ``\lstiC{POP x}'' overwrites the current value of \lstiC{x}. Therefore, when trying to reverse the operation by applying ``\lstiC{PUSH x}'', the value to which \lstiC{x} must be reset is unknown. The \nsemantics will simply always set \lstiC{x} to the default value $0$. This is certainly not surprising, remarking that if we want be coherent with \TIF perspective, the operational semantics of \SCORE must be more sophisticated than \nsemantics.

\section{A second reversible semantics for \SCORE\xspace -- this time with \lstiC{assert}}
\label{section:S-CORE with assertions}

Let us recall from the introduction that we identify \PIF as the standard approach for obtaining a reversible computational model from a non-reversible one. Instances of \PIF abound in the literature .
Broadly speaking, \PIF consists of introducing expressions like $\lstiC{assert(}\Phi_{\lstiC{P}}\lstiC{)}$ into an operational semantics, with 
\lstiC{P} a term of a computational model and $\Phi_{\lstiC{P}}$ a predicate. Applying $\lstiC{assert(}\Phi_{\lstiC{P}}\lstiC{)}$ to a computational state $\sigma$ means that:
\begin{itemize}
\item
\lstiC{P} can be interpreted in $\sigma$ the goal being to transform $\sigma$ into a new state $\tau$ if and only if $\Phi_{\lstiC{P}}$ holds on $\sigma$;
\item
the properties enforced by $\Phi_{\lstiC{P}}$ on $\sigma$ are sufficient to assure that the inverse \invSC{P}, when interpreted in $\tau$, correctly recovers $\sigma$.
\end{itemize}

\paragraph{The key aspect.} 
To obtain \asemantics according to \PIF, we must characterize the properties of states on which ``\lstiC{POP x}'' can be safely applied to produce a new state that ``\lstiC{PUSH x}'' can subsequently transform back into the original state from which ``\lstiC{POP x}'' was initially applied.


\begin{definition}[\asemantics]
\label{definition:asemantics}
The operational semantics \asemantics for \SCORE contains all and only the rules in {\normalfont Fig.~\ref{fig:scoreSemanticnaive}} of \nsemantics but \textsc{pop}, replaced by:
\begin{center}
\bottomAlignProof
\AxiomC{$\sigma(\lstiC{x}) = (v, s)$}
\AxiomC{$\lstiC{assert(}v\ \lstiC{=}\ 0 \lstiC{)} $}
\AxiomC{$\lstiC{assert(}s\ \lstiC{=}\ h \cons t \lstiC{)} $}
\RightLabel{\scriptsize \textsc{assert-pop} \enspace .}
\TrinaryInfC{$\langle \lstiC{POP x}, \sigma \rangle \Downarrow \sigma[\lstiC{x} \to (h,t)]$}
\DisplayProof 
\end{center}
\end{definition}
\noindent
The behavior of the new rule \textsc{assert-pop} can be illustrated by the following pseudo-code:
\begin{align}
\nonumber
&\lstiC{let}\ \sigma(x) = (v,s)\ \lstiC{in}
\\
\nonumber
&\lstiC{if assert(}v\ \lstiC{==}\ 0 \lstiC{) then}
\\
\label{align: code explaining assert-pop}
&\lstiC{ \{if assert(}s\ \lstiC{==}\ h \cons t \lstiC{) then \{ } ``\textrm{update}\ \sigma\ \textrm{to}\ \sigma[\lstiC{x} \to (h,t)]\textrm{''} \lstiC{ \} else abort\}}
\\
\nonumber
&\lstiC{else abort}
\enspace .
\end{align}
\noindent
The pseudo-code starts by evaluating \lstiC{x} in $\sigma$, getting some pair $(v, s)$, analogously to the rule \textsc{pop} it replaces in Fig.~\ref{fig:scoreSemanticnaive}.
The second step is to \emph{verify} that $\lstiC{assert(}v\ \lstiC{==}\ 0 \lstiC{)}$ holds in the state $\sigma$. If not, the whole computation \emph{aborts}. Otherwise, the last step \emph{verifies} $\lstiC{assert(}s\ \lstiC{==}\ h \cons t \lstiC{)}$ in $\sigma$, that is it checks that some $h$ and $t$ exist to match the stack $s$. The interpretation of ``\lstiC{POP x}'' keeps proceeding only in the positive case. Otherwise the computation \emph{aborts}.
 
As it should be evident from~\eqref{align: code explaining assert-pop} that ``\lstiC{P;}\invSC{P}'' does not necessarily lead to a computational state,
we think it is natural to introduce the following notion:

\begin{definition}[A \WeaRev semantics for \SCORE]
\label{definition:WeaRev semantics for SCORE}
A big-step semantic $\Downarrow\, \subseteq \SCORE \times \Sigma \times \Sigma$ for \SCORE is \WeaRev if and only if
$\langle \lstiC{P}, \sigma \rangle \Downarrow \tau
\iff
\langle \invSC{P}, \tau \rangle \Downarrow \sigma$, for every $\lstiC{P} \in \SCORE$ and $\sigma, \tau \in \Sigma$.
\end{definition}
\noindent
and to prove that \asemantics is \WeaRev according to it.

\begin{remark} 
Definition~\ref{definition:WeaRev semantics for SCORE} captures the idea that \invSC{P} can be applied to a state $\tau$ produced by \lstiC{P}, provided than \lstiC{P} does not abort, and vice versa.
This is weaker than requiring ``$\langle \lstiC{(P;}\invSC{P)}, \sigma \rangle \Downarrow \sigma$, for every $\sigma$'', which would imply that neither \lstiC{P} nor \invSC{P} can abort. In the coming sections, such a stronger requirement will be identified as ``\StrRev big-step semantics''.
\end{remark}

\begin{theorem}
\label{theorem: asemantics is eaRev}
The \asemantics in {\normalfont Definition~\ref{definition:asemantics}} is \WeaRev.
\end{theorem}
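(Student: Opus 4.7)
The plan is to prove the biconditional $\langle \lstiC{P}, \sigma \rangle \Downarrow \tau \iff \langle \invSC{P}, \tau \rangle \Downarrow \sigma$ by exploiting the self-duality $\lstiC{-(-P)} = \lstiC{P}$: it suffices to prove one implication, say left-to-right, since applying it to \invSC{P} and swapping $\sigma,\tau$ yields the converse. I would proceed by induction on the derivation of $\langle \lstiC{P}, \sigma \rangle \Downarrow \tau$ (rather than on \lstiC{P} alone), so that the auxiliary judgement $\Downarrow^n$ used by \textsc{for-fwd}/\textsc{for-bwd} is handled uniformly.

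For the atomic rules the argument is a direct check on the \asemantics table. The \textsc{skip}, \textsc{inc}, \textsc{dec}, and \textsc{push} cases are routine since their effect on $\sigma(\lstiC{x})$ is plainly invertible by the syntactic inverse. The key base case is \textsc{assert-pop}: if $\langle \lstiC{POP x}, \sigma\rangle \Downarrow \sigma[\lstiC{x}\to(h,t)]$ succeeded, then by hypothesis $\sigma(\lstiC{x}) = (0, h\!::\!t)$, so applying \textsc{push} to the output state yields $\sigma[\lstiC{x}\to(0, h\!::\!t)] = \sigma$. The assertions are precisely what certifies that the lost information can be reconstructed; this is the crux of \PIF in this setting. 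The \textsc{seq} case is a direct application of the inductive hypothesis, using $\invSC{(P;Q)} = \invSC{Q};\invSC{P}$ to invert the two sub-derivations in reverse order.

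For the \textsc{for-fwd}/\textsc{for-bwd} cases I would first establish two auxiliary lemmas. The first is an invariance lemma: if $\lstiC{x}$ does not occur in $\lstiC{P}$ and $\langle \lstiC{P},\sigma\rangle \Downarrow \tau$, then $\tau(\lstiC{x}) = \sigma(\lstiC{x})$; this is proved by straightforward induction on the derivation, exploiting that \invSC{(\cdot)} preserves the set of variables occurring in a term, so the same invariance holds for \invSC{P}. The second is the iteration lemma: $\langle \lstiC{P},\sigma\rangle \Downarrow^n \tau$ implies $\langle \invSC{P},\tau\rangle \Downarrow^n \sigma$, proved by induction on $n$ using the main inductive hypothesis on single-step derivations and composing via \textsc{step} in reverse. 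Given these, \textsc{for-fwd} with $\sigma(\lstiC{x}) = (v,s)$ and $v\geq 0$ yields $\tau(\lstiC{x}) = (v,s)$ by invariance; applying \textsc{for-fwd} to $\invSC{(FOR x \{P\})} = \lstiC{FOR x \{}\invSC{P}\lstiC{\}}$ on $\tau$, the iteration lemma provides the required $\Downarrow^v$ derivation of \invSC{P}. The \textsc{for-bwd} case is symmetric: starting from $v<0$, the inverse is still \lstiC{FOR x \{}\invSC{P}\lstiC{\}}, which again fires \textsc{for-bwd} on $\tau$ because $\tau(\lstiC{x})$ is unchanged, and the iteration counts match by construction.

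The main obstacle I expect is the \textsc{for-bwd} bookkeeping: one must keep straight the double negation arising from inverting the body of a backward-iterated \lstiC{FOR}, making sure that the sub-term passed to the iteration lemma and the $-v$ iteration count are aligned with $\invSC{(\invSC{P})} = \lstiC{P}$. All other cases reduce either to the assertion-backed recovery in \textsc{assert-pop} or to mechanical applications of the inductive hypothesis.
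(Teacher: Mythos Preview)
Your proposal is correct and in fact considerably more complete than the paper's own argument. The paper declares the proof is ``by structural induction on any term \lstiC{P}'' and then works out in full only the base case \lstiC{POP x}, showing both directions of the biconditional explicitly for that single operator; the remaining cases (in particular the \lstiC{FOR} cases) are left implicit.

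Your route differs in two respects. First, you exploit self-duality to collapse the biconditional to a single implication, which the paper does not do; this halves the work. Second, you induct on the derivation rather than on the term, and you isolate the two auxiliary lemmas (variable invariance and the iteration lemma for $\Downarrow^n$) that are genuinely needed to close the \textsc{for-fwd}/\textsc{for-bwd} cases. The paper never states these lemmas, although any full proof must use them. What your approach buys is a proof that actually goes through for the iteration construct; what the paper's sketch buys is brevity, at the cost of leaving the only non-trivial inductive case unaddressed. One small point you may want to make explicit: the \textsc{step} rule is head-first, so when you reverse an $(n{+}1)$-step iteration you obtain $n$ reversed steps followed by one reversed step, and you need either a trivial ``append at the end'' lemma for $\Downarrow^n$ or the observation that the first-step and last-step presentations of bounded iteration coincide. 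This is routine, but it is the only place where ``composing via \textsc{step} in reverse'' is not literally an instance of \textsc{step}.
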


\begin{proof}
The proof can be carried out by structural induction on any term \lstiC{P} of \SCORE. We focus on proving the most relevant base case 
``$\langle \lstiC{POP x}, \sigma \rangle \Downarrow \tau
   \iff
   \langle \invSC{(POP x)}, \tau \rangle \Downarrow \sigma$'',
for every $\sigma$, and $\tau$.

\begin{itemize}
\item 
Let us start from the ``only if'' implication, assuming $\langle \lstiC{POP x}, \sigma \rangle \Downarrow \tau$.
The assumption says that, by the rule \textsc{assert-pop} in Definition~\ref{definition:asemantics}, some $v, s$ exist such that:
	\begin{align*}
		\sigma(\lstiC{x}) &= (0, v\cons s)
		&&
		\tau = \sigma[\lstiC{x} \to (v,s)]
		\enspace .
	\end{align*}
	\noindent
Interpreting ``$\invSC{(POP x)}$'' in $\sigma[\lstiC{x} \to (v,s)]$ means to interpret ``$\lstiC{PUSH x}$'' in $\sigma[\lstiC{x} \to (v,s)]$ by the rule \textsc{push} in Fig.~\ref{fig:scoreSemanticnaive}. Once both \textsc{assert-pop} and \textsc{push} have been applied, the final state is:
	\begin{align*}
		\underbrace{(\sigma[\lstiC{x} \to (v,s)])}_{\tau}[\lstiC{x} \to (0,v\cons s)]
		& = \sigma[\lstiC{x} \to (0,v\cons s)] = \sigma
		\enspace .
	\end{align*}

\item 
Let us turn to the ``if'' implication, assuming $\langle \invSC{(POP x)}, \tau \rangle \Downarrow \sigma$.
Since ``\invSC{(POP x)} = \lstiC{PUSH x}'', the assumption says that, by the rule \textsc{push} in Fig.~\ref{fig:scoreSemanticnaive}, some $v, s$ exist such that:
	\begin{align*}
		\tau(\lstiC{x}) & = (v, s)
            &&
		\sigma = \tau[\lstiC{x} \to (0,v\cons s)]
		\enspace .
	\end{align*}
	\noindent
Interpreting ``$\lstiC{(POP x)}$'' in $\tau[\lstiC{x} \to (0,v\cons s)]$ yields:
	\begin{align*}
		\underbrace{(\tau[\lstiC{x} \to (0,v\cons s)])}_{\sigma}[\lstiC{x} \to (v,s)]
		& = \tau[\lstiC{x} \to (v, s)] = \tau
		\enspace .
	\end{align*}
\end{itemize}
\end{proof}

\subsection{Some concluding observations on \asemantics}
\label{subsection:concerningAssertions}

Let $\llparenthesis \lstiC{P} \rrparenthesis$ be the function from $\Sigma$ to $\Sigma$ that \asemantics associates with a term \lstiC{P}, that is $\llparenthesis \lstiC{P} \rrparenthesis(\sigma) = \tau$ if and only if $\langle \lstiC{P}, \sigma \rangle \Downarrow \tau $, according to Definition~\ref{definition:asemantics}.

\begin{enumerate}
\item
\label{enumerate:interpretationA}
As \asemantics may abort, the interpretation $\llparenthesis \lstiC{P} \rrparenthesis$ of \lstiC{P} is necessarily a partial function $\Sigma \rightharpoonup\Sigma$ because the state ``abort'' does not naturally belong to $\Sigma$: once in ``abort'' we have no clue at which the state generated it, position shared at least by \cite{paoliniTYPES2015}.

\item
\label{enumerate:interpretationB}
Of course, defining $\Sigma_{\bot}$ as $\Sigma \cup \{\bot\}$, with $\bot$ distinguished from any state in $\Sigma$ to interpret ``abort'', we can transform $\llparenthesis \lstiC{P} \rrparenthesis$ into a total function in $\Sigma \rightarrow\Sigma_\bot$ such that $\llparenthesis \lstiC{P} \rrparenthesis(\sigma) = \bot$ if and only if $\langle \lstiC{P}, \sigma \not\Downarrow \tau \rangle$.
However, we emphasize that $\llparenthesis \lstiC{P} \rrparenthesis$ would not be \emph{globally injective}. Instead, it would be \emph{injective} only on a sub-domain of $\Sigma$, i.e., on all states that never lead to abort. We recall that such an interpretation has been explored in \cite{PalazzoRoversi:Forest}.
\end{enumerate}


\section{A final, \lstiC{assert}-free, hence total, semantics of \textsf{S-CORE}}
\label{section:Assertion free S-CORE}

Let us recall from the introduction that we identify \TIF as the set of all approaches which, starting from a non-reversible model, have as their goal the definition of reversible operational semantics that, by design, can never abort because of an inability to reconstruct the computational state preceding the current one.

\subsection{\rsemantics}
\rsemantics is an \lstiC{assert}-free instance of \TIF. 
It's main aspect is to map every variable \lstiC{x} to a refinement of sates as triples $(v, s, c) \in \ZZ \times \StackZZ \times \NN$ whose last element $c$ acts as a counter. 
States-as-triples eliminates the need to abort computations that involve \lstiC{PUSH} and \lstiC{POP}, ensuring reversibility.

As we aim to formalize part of \rsemantics using the proof-assistant \Coq, we adopt \Coq notation ``\vCoq{ident -> Z*(list Z)*nat}'' to formalize the states-as-triples notation:
\begin{enumerate}
\item  
``\vCoq{ident}'' stands for the set of variable names $V$ in Definition~\ref{definition:S-CORE syntax};  
\item
``\vCoq{Z}'' stands for $\ZZ$;  
\item
``\vCoq{list Z}'' corresponds to the set $\StackZZ$ of all and only the stacks containing values of $\ZZ$;  
\item
``\vCoq{nat}'' represents $\NN$, and ``\vCoq{*}'' denotes the Cartesian product between two types.  
\end{enumerate}
\noindent
\begin{definition}[States as triples]
\label{definition:States as triples}
A state is a map $\sigma : \vCoq{ident -> Z*(list Z)*}$ \vCoq{nat}, where \vCoq{ident} is taken as an alias of the standard type \vCoq{string}.
\end{definition}

\subsection{The never aborting bijections \vCoq{push} and \vCoq{pop}}
\label{subsection:The main reason of nat occurring in Z*(list Z)*nat}

The structure of states in Definition~\ref{definition:States as triples} allows us to define the functions \vCoq{push} and \vCoq{pop} in \Coq, verify their properties \cite{PlazzoRoversi24RevPushPopGit}, and use them for interpreting \lstiC{PUSH} and \lstiC{POP}.

\begin{definition}[The bijections on stacks]
\label{definition:Functions push and pop}
Figure~\ref{fig:definitions of push and pop} defines \vCoq{push} and \vCoq{pop} as inductive functions of type \vCoq{Z*(list Z)*nat -> Z*(list Z)*nat} which proceed by pattern matching on their argument.
\end{definition}
\noindent
If $\sigma$ is a state such that $\sigma(\lstiC{x})$ yields \vCoq{(v,s,c):Z*(list Z)*nat}, then both \vCoq{push} and \vCoq{pop} assure that \vCoq{c} satisfies the \emph{invariant}:
\begin{quote}
``\vCoq{c} is greater than \vCoq{0} if and only if some ``\lstiC{POP x}'', interpreted by \vCoq{pop}, has been applied to states associated to \lstiC{x} in which \vCoq{s} is the empty list \vCoq{[]}, or \vCoq{v} is not $0$.''
\end{quote}
\noindent
The intuition is that the value of \vCoq{c} counts the number of \emph{illegal} \vCoq{pop} applications which generate a state in which we dub any variable \lstiC{x} as ``\emph{broken}'' as soon as its counter \vCoq{c} is greater than \vCoq{0}. 
The effects of illegal \vCoq{pop} applications can be ``undone'' by applying a corresponding number of \vCoq{push}.

\begin{figure}[t]
\begin{lstlisting}[language=Coq]
        Definition push (trpl: Z*(list Z)*nat) : Z*(list Z)*nat :=
        match trpl with
        | (v,    t,   O) => (0, v::t,   O)  (* 1st clause *)
        | (0, v::t, S c) => (0, v::t, S c)  (* 2nd clause *)
        | (u,    s, S c) => (u,    s,   c)  (* 3rd clause *)
        end.

        Definition pop (trpl: Z*(list Z)*nat) : Z*(list Z)*nat :=
        match trpl with
        | (0, v::t,   O) => (v,    t,   O)  (* 1st clause *)
        | (0, v::t, S c) => (0, v::t, S c)  (* 2nd clause *)
        | (u,    s,   c) => (u,    s, S c)  (* 3rd clause *)
        end.
\end{lstlisting}
\noindent
\begin{minipage}{.95\textwidth}
in which:
\begin{itemize}
\item
``\vCoq{match trpl with ...}'' matches the value of ``\vCoq{trpl: Z*(list Z)*nat}'' with the patterns specified to the left of the symbol ``\vCoq{=>}''. The pattern to the right of ``\vCoq{=>}'' is the corresponding result of \vCoq{push} or \vCoq{pop}.
\item
The symbol ``\vCoq{O}'', which is not ``\vCoq{0}'', represents the \emph{zero} of natural numbers in \vCoq{nat} in unary notation. The term ``\vCoq{S c}'' is the ``successor'' of the natural number in ``\vCoq{c: nat}''.
\item
``\vCoq{v::t}'' the list with head \vCoq{v} and tail \vCoq{t}.
\end{itemize}
\end{minipage}
\caption{The bijections \vCoq{push} and \vCoq{pop}.}
\label{fig:definitions of push and pop}
\end{figure}


\subsection{The bijections \vCoq{push} and \vCoq{pop}}
\label{subsection:Detailing out the behavior of pop}

Even though Fig.~\ref{fig:definitions of push and pop} first lists \vCoq{push} and then \vCoq{pop}, we started from designing \vCoq{pop} because the most problematic one, concerning \TIF.
Then we defined \vCoq{push} to be the inverse of \vCoq{pop}. 

\paragraph{The bijection \vCoq{pop}.}
Let us assume that $\sigma(\lstiC{x})$ yields \vCoq{(v,s,c)}, for some fixed \lstiC{x} and let us assume we apply \vCoq{pop} to \vCoq{(v,s,c)}.

\begin{enumerate}
\item
The \vCoq{(* 1st clause *)} applies when we can effectively pop a value out of the stack \vCoq{s}.
This is possible because we do not incur in any information loss because, simultaneously we have that:
\begin{itemize}
\item
the current value of \lstiC{x} is ``\vCoq{0:Z}'';
\item
\lstiC{x} is not \emph{broken} because \vCoq{c} holds ``\vCoq{O:nat}'';
\item
the stack \vCoq{s} contains at least one value.
\end{itemize}
The new triple associated to \lstiC{x} by \vCoq{pop} has the top of \vCoq{s} set to the new value of \lstiC{x}.

\item
The \vCoq{(* 2nd clause *)} applies when \lstiC{x} is broken  because \vCoq{c} \emph{is not} \vCoq{O:nat}, even though the stack \vCoq{s} is not empty, and we could pop its top to replace the current value \vCoq{0:Z} of \vCoq{v}, at least in principle.

This is the point where it is worth underlining the role of the counter \vCoq{c}.

Let us pretend to drop it, working with pairs \vCoq{(v,s)}, instead than triples, as we were using \nsemantics or \asemantics.

Both \vCoq{(* 2nd clause *)} and \vCoq{(* 3rd clause *)} of \vCoq{pop} in Definition~\ref{definition:Functions push and pop} would become:
\begin{lstlisting}[language=Coq]
        ...
        | (0, v::t) => (v, t)  (* new 2nd clause *)
        | (u,    s) => (u, s)  (* new 3rd clause *)
\end{lstlisting}
\noindent
They would prevent \vCoq{pop} to be injective, hence to be reversible, at least in the following case:
\begin{align}
\label{align: pop counterexample}
& \vCoq{pop(0,1::2::[])}
= \vCoq{(1,2::[])}
= \vCoq{pop(1,2::[])}
\enspace .
\end{align}

If we switch again to triples, then \eqref{align: pop counterexample} is no more a counterexample to the injectivity of \vCoq{pop} because:
\begin{align*}
& \vCoq{pop(0,1::2::[],0)}
= \vCoq{(1,2::[],0)}
\neq \vCoq{(1,2::[],S c)}
= \vCoq{pop(1,2::[],c)}
\enspace ,
\end{align*}
where, clearly, ``$\vCoq{S c} \neq \vCoq{0}$''.

In general, the counter \vCoq{c} serves as a \emph{flag}. When \vCoq{c} exceeds ``\vCoq{O:nat}'', it distinguishes states where applying \vCoq{pop} is meaningless from those where \vCoq{pop} behaves as expected, removing an element from the stack.

\item
The \vCoq{(* 3rd clause *)} applies:
\begin{itemize}
\item either when trying to pop an element from an empty stack,
\item or when trying to pop an element from a stack when the value \vCoq{v} associated to \vCoq{x} is not \vCoq{0}.
\end{itemize}
In both cases the counter \vCoq{c} is increased, possibly \emph{breaking} (if not already) the invoveld variable.
\end{enumerate}

\paragraph{The bijection \vCoq{push}.}
It is defined by three clauses, each designed to be the inverse of the corresponding one in \vCoq{pop}.

\begin{itemize}
\item
The \vCoq{(* 1st clause *)} applies to a non broken variable, i.e. when the counter \vCoq{c} holds \vCoq{O:nat}. It pushes \vCoq{v} on top of the stack, yielding \vCoq{v::t} and setting the value of \vCoq{v} to \vCoq{0}.

\item
The remaining clauses manage the \emph{broken} configurations, characterized by values of \vCoq{c} which counts former illegal applications of \vCoq{pop}. In those cases \vCoq{push} decreases the value in \vCoq{c}, eventually setting it to ``\vCoq{O:nat}'', i.e, possibly \emph{repairing} a broken variable.
\end{itemize}

\begin{figure}[t]
	\centering
	\begin{tabular}{c}
		\bottomAlignProof
		\AxiomC{$ $}
		\RightLabel{\scriptsize \textsc{skip}}
		\UnaryInfC{$\langle \lstiC{SKIP}, \sigma \rangle \Downarrow \sigma$}
		\DisplayProof
		\qquad
		\bottomAlignProof
		\AxiomC{$\sigma(\lstiC{x}) = (v, s, c)$}
		\RightLabel{\scriptsize \textsc{inc}}
		\UnaryInfC{$\langle \lstiC{INC x}, \sigma \rangle \Downarrow \sigma[\lstiC{x} \to (v+1,s,c)]$}
		\DisplayProof
		\\ \\
		\bottomAlignProof
		\AxiomC{$\sigma(\lstiC{x}) = (v, s, c)$}
		\RightLabel{\scriptsize \textsc{dec}}
		\UnaryInfC{$\langle \lstiC{DEC x}, \sigma \rangle \Downarrow \sigma[\lstiC{x} \to (v-1,s, c)]$}
		\DisplayProof
		\qquad
		\bottomAlignProof
		\AxiomC{$\sigma(\lstiC{x}) = (v, s, c)$}
		\RightLabel{\scriptsize \textsc{push}}
		\UnaryInfC{$\langle \lstiC{PUSH x}, \sigma \rangle \Downarrow \sigma[x \to \mbox{\lstinline[language=Coq]|push|}(v,s,c)]$}
		\DisplayProof
		\\ \\
		\bottomAlignProof
		\AxiomC{$\sigma(\lstiC{x}) = (v, s, c)$}
		\RightLabel{\scriptsize \textsc{pop}}
		\UnaryInfC{$\langle \lstiC{POP x}, \sigma \rangle \Downarrow \sigma[x \to \mbox{\lstinline[language=Coq]|pop|}(v,s,c)]$}
		\DisplayProof
		\qquad
		\bottomAlignProof
		\AxiomC{$\langle \lstiC{P} , \sigma \rangle \Downarrow \nu $}
		\AxiomC{$\langle \lstiC{Q} , \nu \rangle \Downarrow \tau $}
		\RightLabel{\scriptsize \textsc{seq}}
		\BinaryInfC{$\langle \lstiC{P;Q}, \sigma \rangle \Downarrow \tau$}
		\DisplayProof
		\\ \\
		\bottomAlignProof
		\AxiomC{$\sigma(\lstiC{x}) = (v,s,c)$}
		\AxiomC{$v \geq 0$}
		\AxiomC{$\langle \lstiC{P}, \sigma \rangle \Downarrow^v \tau$}
		\RightLabel{\scriptsize \textsc{for-fwd}}
		\TrinaryInfC{$\langle \lstiC{FOR x \{P\}}, \sigma \rangle \Downarrow \tau$}
		\DisplayProof
		\\ \\
		\bottomAlignProof
		\AxiomC{$\sigma(\lstiC{x}) = (v,s,c)$}
		\AxiomC{$v < 0$}
		\AxiomC{$\invSC{P} = \lstiC{Q}$}
		\AxiomC{$
         \langle \lstiC{Q}, \sigma \rangle \Downarrow^{-v} \tau$}
		\RightLabel{\scriptsize \textsc{for-bwd}}
		\QuaternaryInfC{$\langle \lstiC{FOR x \{P\}}, \sigma \rangle \Downarrow \tau$}
		\DisplayProof
		\\ \\
		\bottomAlignProof
		\AxiomC{$ $}
		\RightLabel{\scriptsize \textsc{base}}
		\UnaryInfC{$\langle \lstiC{P}, \sigma \rangle \Downarrow^0 \sigma$}
		\DisplayProof
		\qquad
		\bottomAlignProof
		\AxiomC{$\langle \lstiC{P}, \sigma \rangle \Downarrow \nu $}
		\AxiomC{$\langle \lstiC{P}, \nu \rangle \Downarrow^{n} \tau$}
		\RightLabel{\scriptsize \textsc{for-rec}}
		\BinaryInfC{$\langle \lstiC{P}, \sigma \rangle \Downarrow^{n+1} \tau$}
		\DisplayProof
	\end{tabular}
	\caption{\rsemantics of \SCORE.}
	\label{fig:scoreSemanticDefinitive}
\end{figure}

\subsection{Properties of \vCoq{push} and \vCoq{pop}}
\label{subsection:Properties of push and pop}

We can finally prove that \vCoq{push} and \vCoq{pop} are each other inverse:
{%
\normalfont
\begin{lstlisting}[language=Coq, label=lemma:popPushInverse]
        Theorem pop_inv_push_inv_pop : forall p : Z*(list Z)*nat,
            (pop (push p) = p) /\ (push (pop p) = p).
\end{lstlisting}
}
\noindent
One can find the proof of the here above ``\vCoq{Theorem pop_inv_push_inv_pop}'' in \cite{PlazzoRoversi24RevPushPopGit}, based on the proof of its two components ``\vCoq{pop (push p) = p}'' and ``\vCoq{push (pop p) = p}'' which, in their turn, are proved as a case-by-case analysis of how \vCoq{pop} and \vCoq{push} behave when one is used just after the other.

\vspace{\baselineskip}\noindent
With \vCoq{pop} and \vCoq{push} available we can finally give:
\begin{definition}[\rsemantics of \SCORE]
\label{definition:rsemantics of SCORE}
Fig.~\ref{fig:scoreSemanticDefinitive} collects all and only its rules.
\end{definition}

\subsection{Strong reversibility of \SCORE}
Let us introduce the following concept:
\begin{definition}[\StrRev semantics of \SCORE]
\label{definition:StrRev semantics for SCORE}
Let $\Sigma$ be the set of all states $\sigma$. A big-step semantics
$\Downarrow \subseteq \SCORE \times \Sigma \times \Sigma$
is \StrRev if and only if
$\langle \lstiC{(P;}\invSC{P)}, \sigma \rangle \Downarrow \sigma$,
for every $\lstiC{P} \in \SCORE$ and $\sigma \in \Sigma$. 
\end{definition}
\noindent
The above ``\vCoq{Theorem pop_inv_push_inv_pop}'' implies:
\begin{corollary}[\SCORE is \StrRev]
\label{corollary:score_is_reversible}
Let \lstiC{P} be a term of \SCORE, a and $\sigma \in \Sigma$.
Then:
\begin{align*}
&
\langle \lstiC{(P;}\invSC{P)}, \sigma \rangle \Downarrow \sigma\ \textrm{and}\ \langle \lstiC{(-P;P)}, \sigma \rangle \Downarrow \sigma
\enspace .
\end{align*}
\end{corollary}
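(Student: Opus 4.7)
The plan is to first establish the slightly stronger \WeaRev property of \rsemantics from Definition~\ref{definition:WeaRev semantics for SCORE}, i.e.\ that $\langle\lstiC{P},\sigma\rangle\Downarrow\tau$ if and only if $\langle\invSC{P},\tau\rangle\Downarrow\sigma$ for every $\lstiC{P}\in\SCORE$ and every $\sigma,\tau\in\Sigma$. The termination argument spelled out for \nsemantics in Section~\ref{subsection:nsemantics never fails} transfers verbatim to \rsemantics because \vCoq{push} and \vCoq{pop} are total on triples and the syntactic proviso on \lstiC{FOR x \{P\}} freezes its iteration count. So, once \WeaRev is in place, for every $\sigma$ I would pick a $\tau$ with $\langle\lstiC{P},\sigma\rangle\Downarrow\tau$ by totality, invoke \WeaRev to get $\langle\invSC{P},\tau\rangle\Downarrow\sigma$, and combine the two through rule \textsc{seq} to conclude $\langle\lstiC{P;}\invSC{P},\sigma\rangle\Downarrow\sigma$. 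The symmetric statement $\langle\lstiC{-P;P},\sigma\rangle\Downarrow\sigma$ follows by applying the same reasoning to \invSC{P} and exploiting the self-duality $\lstiC{-(-P)}=\lstiC{P}$.

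The proof of \WeaRev would proceed by structural induction on \lstiC{P}. The bases \lstiC{SKIP}, \lstiC{INC x}, \lstiC{DEC x} just unfold the rules: incrementing then decrementing (or vice versa) the first component of $\sigma(\lstiC{x})$ restores the triple, the stack and counter being untouched. The cases \lstiC{PUSH x} and \lstiC{POP x} are the crux of the argument and are precisely the reason for introducing the counter: starting from $\sigma(\lstiC{x})=\vCoq{(v,s,c)}$, rule \textsc{push} maps \lstiC{x} to \vCoq{push(v,s,c)} and rule \textsc{pop} then maps it to \vCoq{pop(push(v,s,c))}; by \vCoq{pop_inv_push_inv_pop} this equals \vCoq{(v,s,c)}, restoring $\sigma$. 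The symmetric composition \lstiC{POP x; PUSH x} collapses identically via the other conjunct \vCoq{push(pop p) = p}.

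For the inductive cases I would, in the case \lstiC{P;Q}, unfold rule \textsc{seq} on $\langle\lstiC{P;Q},\sigma\rangle\Downarrow\tau$ to expose an intermediate $\nu$, apply the inductive hypothesis to \lstiC{P} and to \lstiC{Q} in succession, and reassemble a derivation of $\langle\invSC{Q}\lstiC{;}\invSC{P},\tau\rangle\Downarrow\sigma$, i.e.\ of $\langle\invSC{(P;Q)},\tau\rangle\Downarrow\sigma$. For \lstiC{FOR x \{P\}}, the proviso of Definition~\ref{definition:S-CORE syntax} becomes decisive: since \lstiC{x} does not occur in \lstiC{P}, the triple $\sigma(\lstiC{x})=(v,s,c)$ is preserved throughout any derivation of the body, hence also throughout the iterated judgement $\langle\lstiC{P},\sigma\rangle\Downarrow^{|v|}\tau$, so that $\tau(\lstiC{x})=(v,s,c)$ and the inverse iteration can legitimately reuse the same $v$. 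Having fixed this invariant, an auxiliary induction on the iteration count, built on the outer hypothesis for \lstiC{P}, produces $\langle\invSC{P},\tau\rangle\Downarrow^{|v|}\sigma$, and then rule \textsc{for-fwd} or \textsc{for-bwd} applied to $\lstiC{FOR x \{}\invSC{P}\lstiC{\}}$ closes the case; the $v<0$ branch additionally uses $\lstiC{-(-P)}=\lstiC{P}$.

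The main obstacle I expect is the bookkeeping inside the \lstiC{FOR} case. Rule \textsc{for-rec} unfolds $\Downarrow^{n+1}$ as one step of \lstiC{P} followed by $n$ further steps, but after inversion the natural order becomes $n$ steps of \invSC{P} followed by one final step, so the inner induction delivers two pieces that must be glued in the opposite order. This gluing is absorbed by a routine concatenation lemma of the shape ``$\langle\lstiC{P},\sigma\rangle\Downarrow^m\nu$ and $\langle\lstiC{P},\nu\rangle\Downarrow^k\tau$ imply $\langle\lstiC{P},\sigma\rangle\Downarrow^{m+k}\tau$'', itself proved by easy induction on $m$; it is however the only step of the plan that is not either a straightforward unfolding of a rule or a direct appeal to the already \Coq-certified \vCoq{pop_inv_push_inv_pop}.
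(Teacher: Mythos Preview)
Your proposal is correct and spells out the natural structural-induction argument that the paper leaves entirely implicit: in the text the corollary is simply asserted to follow from \vCoq{pop_inv_push_inv_pop}, with all remaining details delegated to the \Coq development. Your route through \WeaRev plus totality, the appeal to the syntactic proviso on \lstiC{FOR x \{P\}} to freeze $\sigma(\lstiC{x})$ across iterations, and the small concatenation lemma for $\Downarrow^{n}$ are exactly the ingredients one expects that formalization to contain, so there is nothing to compare or correct.
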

\noindent
Finally, 
let $\llparenthesis \lstiC{P} \rrparenthesis$ be the function from $\Sigma$ to $\Sigma$ that \rsemantics associates with a term \lstiC{P}, that is $\llparenthesis \lstiC{P} \rrparenthesis(\sigma) = \tau$ if and only if $\langle \lstiC{P}, \sigma \rangle \Downarrow \tau $, according to \rsemantics as given in Definition~\ref{definition:rsemantics of SCORE}.
Corollary~\ref{corollary:score_is_reversible} implies:
\begin{corollary}
\label{corollary:SCORE term are bijections}
For every term \lstiC{P} of \SCORE, the function
$\llparenthesis \lstiC{P} \rrparenthesis$ is a \emph{bijection}.
\end{corollary}

\subsection{\rsemantics manages failures of \asemantics reversibly}
\label{subsection:reversibleFailureManagement}

Let us focus on the two following observations:
\begin{enumerate} 
\item 
Interpreting a \SCORE term \lstiC{P} according to \rsemantics from state $\sigma$ cannot abort even \MH{when}
a variable \lstiC{x} would be broken due to attempts to apply \vCoq{pop} in a non reversible context, e.g. when trying to \vCoq{pop} non-existing elements from its stack. Instead, the execution of the \rsemantics continues until it reaches a terminal state.

\item 
Conversely, interpreting \lstiC{P} according to \asemantics from a state $\sigma$ results in an immediate abort as soon as \lstiC{x} becomes broken in some state $\tau$, as the computation leading to $\tau$ cannot be ``undone'' uniquely, the term \lstiC{P} \emph{fails}.
\end{enumerate} 
\noindent 
All this means that \asemantics \emph{fails} in interpreting \lstiC{P} if and only if \rsemantics halts on \lstiC{P} in a state where at least one variable remains \emph{broken}. This allows us to conclude that \rsemantics recovers all the situations in which \asemantics would fail.

\section{Conclusion}
\SCORE and \rsemantics provide a working example in which reversible programs can \emph{always} be interpreted as \emph{total bijections}. Specifically, the use of \lstiC{assert} is not required to formalize a reversible operational semantics, provided that the state structure on which the semantics operates is sufficiently refined. 

This refinement is precisely what we achieve by transitioning from \asemantics to \rsemantics -- the former implementing \PIF and the latter \TIF, terminology that we introduced in Section~\ref{section:Introduction}. Unlike \asemantics, which fails in certain states, \rsemantics continues computation, \emph{effectively managing failures reversibly}. 

Based on \SCORE and \rsemantics, we aim at determining whether a general approach exists for identifying the necessary structural updates in state representation to transition from \PIF to \TIF.

\bibliographystyle{eptcs}
\bibliography{bibliography.bib}


\end{document}